\newtheorem{prop}{Proposition}
\newtheorem{thm}{Theorem}
\newcommand{\be}{\begin{equation}}
\newcommand{\ee}{\end{equation}} 
\newcommand{\eei}{\end{equation}\indent\indent}
\newcommand{\bc}{\begin{center}}
\newcommand{\ec}{\end{center}}
\newcommand{\ber}{\begin{eqnarray*}}
\newcommand{\ear}{\end{eqnarray*}}
\newcommand{\ba}{\begin{array}}
\newcommand{\ea}{\end{array}}
\newcommand{\na}{\nabla}
\newcommand{\vv}{{\cal V}}
\newcommand{\bea}{\begin{eqnarray}}
\newcommand{\eea}{\end{eqnarray}}
\newcommand{\nn}{\nonumber}
\newcommand{\ei}{\end{itemize}}
\newcommand{\bra}[1]{\left(#1\right)}
\newcommand{\bras}[1]{\left[#1\right]}
\newcommand{\Lietwo}{{\cal L}}
\newcommand{\reff}[1]{(\ref{#1})}
\def\case#1/#2{\textstyle\frac{#1}{#2} }
\def\fp{f_{,R}}
\def\fpp{f_{,RR}}
 \def\prd{Phys.\ Rev.\ D }
 \def\plb{{\em Phys. Lett.\/} {\bf B}}
\def\cqg{{\em Class. Quantum Grav.\/} }
\def\aph{{\em Ann. Phys. (NY)\/} }
\begin{document}
\sloppy

\title{ Collapsing spherical stars in f(R) gravity}

\author{Rituparno Goswami}
\email{Goswami@ukzn.ac.za}
\affiliation{Astrophysics \& Cosmology Research Unit,
School of Mathematics Statistics and Computer Science,
University of KwaZulu-Natal,
Private Bag X54001, Durban 4000, South Africa.}

\author{Anne Marie Nzioki}
\email{anne.nzioki@gmail.com}
\affiliation{Astrophysics \& Cosmology Research Unit,
School of Mathematics Statistics and Computer Science,
University of KwaZulu-Natal,
Private Bag X54001, Durban 4000, South Africa.}

\author{Sunil. D. Maharaj}
\email{Maharaj@ukzn.ac.za}
\affiliation{Astrophysics \& Cosmology Research Unit,
School of Mathematics Statistics and Computer Science,
University of KwaZulu-Natal,
Private Bag X54001, Durban 4000, South Africa.}

\author{Sushant G. Ghosh}
\email{sgghosh@gmail.com}
\affiliation{Astrophysics \& Cosmology Research Unit,
School of Mathematics Statistics and Computer Science,
University of KwaZulu-Natal,
Private Bag X54001, Durban 4000, South Africa.}
\affiliation{Centre for Theoretical Physics,
Jamia Millia Islamia, New Delhi 110025, India.}

\begin{abstract}
We perform a careful investigation of the problem of physically realistic gravitational collapse of massive stars in $f(R)$-gravity. We show that the extra matching conditions that arise in the modified gravity impose strong constraints on the stellar structure and thermodynamic properties. In our opinion these constraints are unphysical. We prove that no homogeneous stars with non-constant Ricci scalar can be matched smoothly with a static exterior for any nonlinear function $f(R)$. Therefore, these extra constraints make classes of physically realistic collapse scenarios in general relativity, non-admissible in these theories. We also find an exact solution for an inhomogeneous collapsing star in the Starobinski model that obeys all the energy and matching conditions. However, we argue that such solutions are fine-tuned and unstable to matter perturbations. Possible consequences on black hole physics and the cosmic censorship conjecture are also discussed.
\end{abstract}
\pacs{04.20.Cv , 04.20.Dw}
\maketitle
\nopagebreak

\section{Introduction}
In spite of the success of general relativity (GR), both in astrophysics and cosmology, alternative theories of gravity exist. In addition to theoretical considerations, such theories are motivated by the ambiguous nature of dark energy in cosmology which is responsible for the observed late time accelerated expansion of the universe. The alternative possibility, in an attempt to do away with the need for dark energy, is to conjecture that GR is an ``{\em effective}'' theory of a more general theory on cosmological scales. Among the modified theories of gravity that provide a late time acceleration for the universe, without the need for the presence of any exotic fluids, is $f(R)$-gravity. This theory is based on a gravitational action that contains an arbitrary but well defined function of the Ricci scalar $R$ \cite{DEfR,fr,fr2,fr1,scm}. Some of these models naturally admit a phase of accelerated expansion both in the early universe as an inflationary phase \cite{star80}, and also in a late time phase after passing through a matter dominated decelerating expansion \cite{shosho}. These theories essentially contain an additional scalar degree of freedom which is commonly interpreted as a scalar field called the scalaron, with the scalaron potential constructed from the Ricci scalar. An interesting aspect of this class of theories (contrary to other models that have the square of Ricci or Riemann tensor in the action) is that the  Ostrogradski instability is not problematic in these theories, despite the equations of motion being fourth-order in the metric components.

Although $f(R)$-gravity has been quite successful in providing a geometrical origin of the dark sector of the universe, it poses considerable problems in the astrophysical sector which we list below:
\begin{enumerate}[(a) ]
\item It is extremely difficult to find exact solutions of static or dynamic stellar objects as the field equations are fourth-order differential equations in the metric components.  Nevertheless, significant attention has been dedicated in finding spherically symmetric solutions in $f(R)$-theories \cite{ssfr,amn}, including the collapsing solutions \cite{gc2,alvaro,Dong}.
\item The post-Newtonian and parametrised post-Newtonian expansion of these theories put a strong constraint on the parameters of the theory which may not always be compatible with cosmological predictions. Hence the parameters of the function $f(R)$ have to be constrained by both astrophysical and cosmological observations to get a viable model.
\item The fourth-order field equations generate extra matching conditions between two spacetimes beyond the usual Israel-Darmois \cite{Israel, Darmois} conditions in GR. The extra conditions arising from the matching of the Ricci scalar and it's normal derivative across the matching surface, heavily constrict the set of useful astrophysical solutions. For any stellar object, the spacetime of the interior of the star has to be suitably matched with the exterior spacetime. Even the bottom-up picture of the universe is made up of spherical or almost spherical stellar objects immersed in vacuum, and this requires proper spacetime matching. It was recently shown that a Einstein-Strauss-like construction is not possible in non-linear $f(R)$-theories \cite{Clifton}, and these extra matching conditions lead to fine-tuning problems for static star models \cite{3GR}.
\end{enumerate}

In this paper, based on the existence of extra matching conditions, we derive a general result for all $f(R)$-theories with a non-linear function $f$: homogeneous dynamic stars with non-constant Ricci scalar cannot be matched to a static exterior spacetime. This result, though very interesting, is also quite heartbreaking in the sense that a theory of gravity should not determine the structure and thermodynamics of the star, this should be determined by stellar physics. Furthermore all homogeneous collapsing stellar models of GR no longer remain viable models in these theories. This has serious consequences on black hole physics as the most important example of black hole formation in GR is the Oppenheimer-Snyder-Datt \cite{osd} collapse which is the model of a collapsing homogeneous dust ball with an exterior Schwarzschild spacetime. Such models are no longer admissible in these $f(R)$-theories, and hence we need to find other examples of physically realistic matter collapse leading to a black hole. The detailed investigations on the gravitational collapse of homogeneous matter in modified gravity \cite{alvaro} also become redundant in this case.

Now the key question is: Is it possible to find an exact or numerical solution of a collapsing stellar model in $f(R)$-gravity with physically realistic matter that satisfies all the matching conditions with a Schwarzschild exterior? Since this is not possible with a homogeneous star, in this paper we find an exact inhomogeneous solution in the Starobinski model with $f(R)=R+\alpha R^2$, where the collapsing stellar matter has anisotropic pressure and heat flux. It is quite interesting to observe that this solution mimics the Lemaitre-Tolman-Bondi dust solution \cite{ltb,tb,hb} in GR. Thus we show that the space of such physically interesting solutions where the collapsing star can be matched to a static Schwarzschild exterior, is non-empty. Also, in spite of standard matter obeying all physically realistic energy conditions, we argue that to find such solutions we need a considerable fine-tuning of the thermodynamic properties of the collapsing star. Hence these solutions seem to be unstable with respect to the matter perturbations.

The paper is organised as follows: In the next section we give a brief overview of the field equations in $f(R)$-gravity, energy momentum tensor for the collapsing stellar matter and the energy conditions. In section 3, we discuss the matching conditions in these theories and state the no-go proposition for homogeneous stars. In section 4, we present an exact collapsing solution of a spherically symmetric star in the Starobinski model with $f(R)=R+\alpha R^2$, that obeys all the required matching conditions with the vacuum, static Schwarzschild exterior. In the final section we discuss the stability and genericity of such solutions.

We use units which fix the speed of light and the
gravitational constant via $8\pi G = c^4 = 1$, and throughout this paper we used the metric signature $+2$.

\section{Field equations and Energy conditions}
In order to study spherically symmetric solutions in $f(R)$-gravity, we begin by modifying the Einstein-Hilbert action. 
The modification in $f(R)$-gravity, is obtained by generalising the Lagrangian in the Einstein-Hilbert action 
such that the Ricci scalar $R$ is replaced with a function $f(R)$, so that
\be
\label{action}
{\cal S}= \frac12 \int dV \bras{\sqrt{-g}\,f(R)+ 2\,\Lietwo_{M}(g_{ab}, \psi) } ~.
\ee
where $ \Lietwo_{M} $ is the Lagrangian density of the matter fields $\psi$,  $g$ is
the determinant of the metric tensor $g_{ab}$ $(a,b=0,
1, 2, 3)$, $R$ is the scalar curvature and $f(R)$ is the real
function defining the theory under consideration. Varying the action \reff{action} with respect to the metric over a 4-volume
yields
\be
\label{field}
G_{ab} \, \fp -  \frac12 g_{ab}\,(f-R \,  \fp)
 -\na_{a}\na_{b}\fp +g_{ab} \, \Box  \fp =   T^{M}_{ab}~,
\ee
where $\fp =d f(R)/dR $, $\Box \equiv \na_{c}\na^{c}$, $G_{ab}$ is the usual Einstein tensor and
$T^{M}_{ab} $ is the matter energy momentum tensor (EMT) defined by \cite{Herrera}
\be
 \label{EMT}
T^{M}_{ab}=- \frac{2}{\sqrt{-g}} \, \frac{\delta \Lietwo_{M}} {\delta g^{ab} } ~.
\ee
It can be seen that for the special case $f(R) = R$, the field equations \reff{field} reduce to the
standard Einstein field equations. These
theories are also  known as fourth-order gravity, since the
term $( g_{ab} \Box - \nabla_a \nabla_b )\fp $ has fourth-order
derivatives with respect to the metric.  

We now consider a spherically symmetric spacetime whose geometry is
determined by the metric
\be 
\label{genmetric} 
ds^2 = - e^{2\nu
(t,r)} dt^2 + e^{2\psi (t,r)} \,dr^2 + C^2(r,t) \, d\Omega^2 ~, 
\ee
where $ d\Omega^2  = d\theta^2 +\sin^2\theta \, d \phi^2 $. In
terms of its components, the EMT is defined as 
\be 
\label{decompEMT}
T_{ab} = \mu\, u_{a} \,u_{b} + p\,h_{ab} + 2\, q u_{(a}\, n_{b)} -
\Pi \bra{ n_{a} \,n_{b}-\frac13 h_{ab}}~, 
\ee 
where $\mu$ is the
energy density,  $u^{a}$ is the four-velocity of the fluid
satisfying $u_{a}u^{a} = -1$, the pressure term $p$ is given by 
\be
p= \frac{p_{r} + 2 \,p_{t}}{3} \;,
\ee 
where $p_{r}$ and $p_{t}$ are,
respectively the radial and tangential components of the pressure,
$h_{ab}$ is the spatial projection tensor defined as 
\be 
h_{ab} =u_{a} u_{b} + g_{ab}~,
 \ee 
 $n^{a}$ is
the spatial unit vector in the radial direction with the properties
$n_{a}n^{a} = 1, ~n_{a}u^{a} = 0$, the term $q = q_{a} n^{a} $ 
is the component along $n^{a}$ of radial energy flux vector $q^{a}$, and $ \Pi =
p_{t} - p_{r}$ measures the amount of anisotropy ($p_{r} = p_{t}$ corresponds to isotropic
pressure and $p_{r} \neq p_{t}$ for anisotropic pressure). 
We wish to find the general solution of the modified Einstein field
equations (\ref{field}) with the metric (\ref{genmetric}), which contains three
arbitrary functions.  

In order for the matter field to be physically realistic it must obey
one or all of the energy conditions \cite{HE}. For a spherically symmetric fluid with the EMT
\reff{decompEMT}, for the energy conditions 
to be satisfied we must satisfy the following inequalities \cite{Kolassis, Chan}: 
\bea
\label{EC1}
&\lvert \mu + p_{r} \rvert- 2\, \lvert q\rvert \geq 0 ~,\\
\label{EC2}
&\mu - p_{r} + 2\,p_{t}+\bigtriangleup \geq 0 ~,
\eea
 together with
{\begin{enumerate}[(a)]
 \item{\textit{weak} energy conditions (WEC)}
\be
\label{WEC}
\mu - p_{r} +\bigtriangleup \geq 0 ~,
\ee
\item{\textit{dominant} energy conditions (DEC)}
\bea
\label{DEC1}
&\mu - p_{r} \geq 0 ~, \\
\label{DEC2}
 &\mu - p_{r} -2\,p_{t} +\bigtriangleup \geq 0 ~,
\eea
\item{\textit{strong} energy conditions (SEC)}
\be
\label{SEC}
2\,p_{t}+ \bigtriangleup \geq 0~,
\ee
\end{enumerate}}
where $\bigtriangleup = \sqrt{\bra{\mu + p_{r}}^2 - 4\,q^2}$.

We can write the components of the $G_{ab}$ tensor, for which the
metric \reff{genmetric} satisfies the field equations \reff{field}, as follows:
\bea
\label{genGab1}
G^{0}{}_{0} &=& -\,\frac{1}{ C^2 }
\bras{1+e^{-2\nu} \bra{\dot{C}^2
+2 \,\dot{\psi} \,C\,\dot{C} } \right. \nn \\  
&&	\left.
-\,e^{-2 \psi} \bra{{C^{\prime}}^2
+2\,C\, C^{\prime\prime}-2\,\psi^{\prime} \,C\, C^{\prime}}}  ~,
\\
\label{genGab2}
 G^{1}{}_{1} &=& -\,\frac{1} { C^2 }
\bras{1- e^{-2 \psi} \bra{{C^{\prime}}^2
+2\,\nu^{\prime} \,C\, C^{\prime}} \right. \nn \\  
&&	\left.
+ \,e^{-2\nu} \bra{\dot{C}^2
+2\,C\, \ddot{C}-2 \,\dot{\nu}  \,C\,\dot{C}} }  ~,
\\
\label{genGab3}
 G^{2}{}_{2} &=& G^{3}{}_{3} =
-\,\frac {1}{ C}
\bras{e^{-2\nu} \bra{\dot{\psi} \, \dot{C}- \dot{\nu}\,\dot{C}
+\dot{\psi}^{2} \, C+\ddot{C} \right. \right. \nn\\
&&\left.\left. +\, \ddot{\psi} \, C 
- \dot{\nu} \, \dot{\psi} \,  C }  + e^{-2\psi} \bra{ \psi^{\prime} \, C^{\prime}
- \nu^{\prime} \,C^{\prime} - {\nu^{\prime}}^{2}  \,C \right. \right.\nn\\
&&\left. \left.
 -\,C^{\prime\prime}
 - \nu^{\prime\prime}\, C   + \nu^{\prime} \, \psi^{\prime} \,C}}~,
 \\
 \label{genGab4}
G^{0}{}_{1} &=&  \frac{2}{  e^{2\nu} \,C  }\bras{\dot{C}^{\prime}
- \nu^{\prime} \,\dot{C}- \dot{\psi}\,C^{\prime}}~,
\eea
where ( $ \dot{}$ ) and (${}^\prime$ ) denote the partial derivative
with respect to $t$ and $r$ respectively. The Ricci scalar for the
metric is
\begin{widetext}
\bea
\label{genRicci}
R&=& \frac {2}{C^{2}}\bras{1
-e^{-2\psi} \bra{2\,\nu^{\prime}\,  C \, C^{\prime}
- \nu^{\prime} \,\psi^{\prime}\,  C^{2}+ \nu^{\prime\prime}\, C^{2}
+ {\nu^{\prime}}^{2}\,  C^{2}  -2\, \psi^{\prime}\, C \,C^{\prime}
+2\, C \, C^{\prime\prime} +{C^{\prime}}^{2}} \right. \nn\\
&&\left.\qquad  \qquad-e^{-2\nu} \bra{2\,\dot{\nu}\,C\, \dot{C}
+\dot{\nu} \, \dot{\psi}\, C^{2} - \ddot{\psi} \, C^{2}- \dot{\psi}^{2}\,C^{2}
-2\,\, \dot{\psi}\,C\, \dot{C}-2 \,C \,\ddot{C} -\dot{C}^{2}} }\;.
\eea
\end{widetext}

\section{Junction conditions: A no-go proposition}
In order to study gravitational collapse,
it is necessary to describe adequately the geometry of the interior
and exterior regions and to give the conditions which allow matching of these regions.
Any astrophysical object is immersed in vacuum or almost vacuum spacetime 
(like any star within the stellar system), and hence the exterior spacetime around 
a spherically symmetric star is well described by the Schwarzschild geometry. 
Therefore any physically realistic star should be matched with a static vacuum 
solution which in the case of spherical symmetry is
the Schwarzschild geometry in GR.

We consider matching two spacetimes $\vv^{\pm}$ with the boundary surface
denoted by $\Sigma$. The junction surface must be the same in $\vv^{+}$ and
$\vv^{-}$, which implies continuity of both the metric and the extrinsic curvature
of $\Sigma$ as in GR \cite{Israel, Darmois}. Moreover, in $f(R)$-theories of gravity, continuity of the Ricci scalar across the boundary surface and 
continuity of its normal derivative are also required \cite{Deruelle, Clifton, Senovilla}. \\

To understand the above in some detail let us write the metric of the interior and exterior 
spacetime locally (near the matching surface) in terms of the Gaussian coordinates
\be
\label{gaussmetric}
ds^2 =g_{ab} \,d\xi^{a}\,d\xi^{b} = d\tau^2 + \gamma_{ij} \, d\xi^{i}\,d\xi^{j} ~,
\ee
where $\xi^{i}, ~i = 1,2,3$ are the intrinsic coordinates to $\Sigma$,
$\gamma_{ij}$ is the intrinsic metric (first fundamental form) of $\Sigma$
and the boundary is located at $\tau =0$. Given \reff{gaussmetric}, together
with the extrinsic curvature (second fundamental form) of the boundary
surface defined by
\be
K_{ij} = -\frac{1}{2}\,\partial_\tau \gamma_{ij}~,
\label{exrinsic}
\ee
the Ricci scalar can be written as
\be
\label{R}
R = 2\, \partial_\tau K - \tilde{K}_{ij}\,\tilde{K}^{ij} - \frac{4}{3} K^2 + \cal{R}~,
\ee
where $\cal{R}$ is the Ricci curvature constructed from the 3-metric
$\gamma_{ij}$, $K$ is the trace part of the extrinsic curvature and
$\tilde{K}_{ij}$ is the trace-free part.

The continuity requirements at the boundary lead to the following
junction conditions in $f(R)$-theories
\bea
\label{junction1}
\left[\gamma_{ij} \right]^+_- = 0~,\\
\label{junction2}
\fpp \left[ \partial_\tau R \right]^+_-  = 0~,\\
\label{junction3}
\fp  \left[ \tilde{K}_{ij} \right]^+_- = 0\;,\\
\label{junction4}
\left[ K \right]^+_- = 0~,\\
\label{junction5}
\left[ R \right]^+_- = 0~,
\eea
provided $\fpp \neq 0$. For further details, we refer the reader
to \cite{Deruelle, Senovilla}. It is worth noting that the conditions 
(\ref{junction2}) and (\ref{junction5}) are the extra conditions that 
arise in $f(R)$-theories with non-linear function $f$. These 
extra conditions are necessary for the continuity of the field equations 
across the matching surface, and indeed impose some considerable
 constraints on viable spacetimes which we describe below in 
 the proposition.

\begin{prop}
For any $f(R)$-gravity with a non-linear function $f$, a dynamic homogeneous spacetime with non-constant Ricci scalar cannot be matched with a static spacetime across a fixed boundary.
\end{prop}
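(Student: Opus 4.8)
The plan is to derive a contradiction directly from the two extra junction conditions \reff{junction2} and \reff{junction5} that are special to nonlinear $f(R)$-gravity, using spatial homogeneity of the interior and staticity of the exterior to pin down the Ricci scalar on the matching surface $\Sigma$. First I would exploit homogeneity of the dynamic interior: spatial homogeneity forces the interior Ricci scalar to depend on time alone, $R^-=R^-(t)$, so that its gradient is purely timelike. Because the interior metric \reff{genmetric} is diagonal in $(t,r)$, the unit normal to the fixed-$r$ boundary points in the purely radial direction, and hence the Gaussian normal derivative reduces to a radial derivative, giving $\partial_\tau R^-=0$ on $\Sigma$. Since $f$ is nonlinear we have $\fpp\neq0$, so \reff{junction2} immediately yields $\partial_\tau R^+=0$ on $\Sigma$ as well.

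Next I would analyse the exterior. Staticity together with spherical symmetry forces $R^+$ to depend on the radial coordinate alone, so that $\nabla R^+$ is purely radial. The boundary $\Sigma$ is a timelike hypersurface, and its unit normal---being orthogonal to a timelike tangent in the radial--temporal plane---necessarily carries a non-vanishing radial component. Consequently $\partial_\tau R^+=0$ can hold only if the radial derivative of $R^+$ vanishes on the boundary. Feeding this back into the tangential (timelike) derivative of $R^+$ along $\Sigma$, which is proportional to the same radial derivative, shows that $R^+$ is in fact constant everywhere on $\Sigma$.

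The contradiction then follows from \reff{junction5}: continuity of the Ricci scalar forces $R^-(t)=R^+\big|_\Sigma=\text{const}$ along the entire matching surface, so $R^-$ cannot depend on $t$. Combined with homogeneity (which already removed all spatial dependence), this means the interior Ricci scalar is globally constant, contradicting the hypothesis of a non-constant $R$. Hence no such matching exists.

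The hard part will be step two: showing rigorously that the single normal-derivative condition $\partial_\tau R^+=0$ suffices to make $R^+$ constant over the \emph{whole} of $\Sigma$, and not merely at one instant. The subtlety is that for a genuinely collapsing star the boundary is \emph{not} at fixed exterior radius---it sweeps through a range of radii as the star shrinks---so a priori a moving boundary might reconcile a time-varying $R^-$ with a static $R^+$ by probing different radial shells. The argument must therefore exploit the geometric fact that the normal always retains a radial component, so that \reff{junction2} annihilates precisely the radial gradient that a moving boundary would otherwise use, leaving $R^+\big|_\Sigma$ no freedom to vary. Some care is also needed to ensure $\fpp\neq0$ at the common boundary value of $R$, which holds away from isolated degeneracies for any nonlinear $f$.
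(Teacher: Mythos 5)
Your proof is correct, but it is considerably more elaborate than the paper's, which is a two‑sentence argument resting on the continuity condition \reff{junction5} alone: homogeneity makes the interior Ricci scalar a function of time only, so it evolves on one side of the boundary, while ``the Ricci scalar remains constant'' on the static side, so $\left[R\right]^+_-=0$ must fail at some epoch. What you do differently is to bring in the normal‑derivative condition \reff{junction2} as well, precisely to close the loophole you correctly identify: for a genuinely collapsing star the boundary is comoving, hence (by continuity of the areal radius) it sweeps through a range of radii of the static exterior, and a static but radially non‑constant $R^+$ could a priori track a time‑varying $R^-$ along the moving surface. Your chain --- homogeneity makes $\nabla_a R^-$ timelike so $\partial_\tau R^-=0$ on $\Sigma$, then \reff{junction2} with $\fpp\neq 0$ forces $\partial_\tau R^+=0$, then the nonvanishing radial component of the spacelike normal kills $\partial_{r_s}R^+$ and with it the tangential variation of $R^+$ along $\Sigma$ --- legitimately establishes that $R^+\vert_\Sigma$ is constant, after which \reff{junction5} delivers the contradiction. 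The paper's version buys brevity at the cost of implicitly assuming the static exterior presents a constant Ricci scalar to the boundary (trivially true for the Schwarzschild exterior used later, where $R^+\equiv 0$, but not automatic for a general static exterior); your version buys rigor for the general statement, at the cost of needing $\fpp\neq 0$ at the matching value of $R$, a caveat the paper also attaches to \reff{junction2}. Both routes terminate in the same violation of \reff{junction5}.
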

\begin{proof} 
If the dynamic homogeneous spacetime has non-constant Ricci scalar then the Ricci scalar will evolve with time on one side of the boundary, whereas on the other side of the boundary the Ricci scalar remains constant as the spacetime is static. Hence the junction condition (\ref{junction5}) can never be satisfied for all epochs.
\end{proof}

The above proposition immediately nullifies the existence of homogeneous dynamic stars with non-constant Ricci scalar for example, collapsing dustlike matter as in Oppenheimer-Snyder-Datt model \cite{osd} which is smoothly matched to the exterior Schwarzschild spacetime, in these theories. On the same note the Einstein-Strauss construction is not possible. The only homogeneous collapsing stars that can be matched to a static exterior are those which have a constant Ricci scalar in the interior (examples are Vaidya, dS/AdS Vaidya, charged Vaidya or collapsing perfect fluids with a null equation of state). Thus the modification in the theory of gravity heavily constrains and dictates the structure and the thermodynamic properties of the collapsing star. This is surely unphysical as the stellar properties should be the outcome of the stellar physics as the gravitational collapse of the star commences. 

Another important question that may arise here is: what happens if the exterior is non-static? The solar system experiments constrain heavily such a scenario, and the time variation must be in the cosmological time scale. However, the important point is, the time scale of gravitational collapse of  massive stars is much smaller than the cosmological time scales. Hence if the exterior is non-static, the matching of the Ricci scalar and the normal derivative at the surface of a homogeneous star is  still not possible. One possible way to avoid such a scenario arises if we allow a ``jump" in the curvature terms in the field equations. In other words we do not match both the Ricci scalar and itÕs normal derivative simultaneously. This will result in surface stress energy terms, that are purely generated by the dynamic curvature. However such surface stresses on realistic collapsing stars must have observational signatures and can be established via experimental evidences.

Furthermore, we have to find a suitable model explaining the dynamical black hole formations in these theories. Although a toy model, the Oppenheimer-Snyder-Datt collapse is widely believed to be a general model of black hole formation \cite{rp}. The geometry of the trapped surfaces changes considerably when the collapsing matter is inhomogeneous (like in the case of Lemaitre-Toman-Bondi collapse in GR \cite{psj,dc}) and in many other cases where a locally naked central singularity develops (as in the solution of the next section). Existence of a Cauchy horizon due to a naked singularity can prevent the spacetime to be future asymptotically simple, and hence the general global proofs in most of the theorems of black hole dynamics and thermodynamics have to be reanalysed \cite{HE}.

\section{An exact collapsing solution in $f(R) = R+ \alpha\,R^2$ gravity} 

Having established that no homogeneous collapsing star can be matched to a static exterior, the key question here is, whether it is possible to find a physically realistic inhomogeneous collapsing stellar solution in $f(R)$-gravity. In this section we address this question in the case of $f(R)=R+\alpha\,R^2$. By the phrase ``physically realistic'' we mean the following:
\begin{itemize}
\item The collapsing stellar matter should obey all the energy conditions.
\item At the comoving boundary of the collapsing star, the interior spacetime should matched smoothly with a Schwarzschild spacetime as all experimental tests in the solar system indicate that the spacetime outside the Sun is well described by the Schwarzschild geometry.
\end{itemize}

In order to satisfy the second condition above, we must choose a non-linear function $f$ that has Schwarzschild spacetime as a vacuum solution. For this, we recall the extension of Birkhoff's theorem for $f(R)$-gravity \cite{amn} which states that 
\begin{thm}
For all functions $f(R)$ which are of class $C^3$ at $R=0$ and
$f(0)=0$ while
$f'(0)\ne 0$, the Schwarzschild solution is the only static spherically symmetric vacuum 
solution with vanishing Ricci scalar.
\end{thm}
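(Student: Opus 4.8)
The plan is to exploit the fact that imposing $R\equiv 0$ throughout the spacetime collapses the fourth-order field equations \reff{field} onto the ordinary vacuum Einstein equations, after which the classical uniqueness of the static spherically symmetric Ricci-flat metric finishes the argument. Before the main reduction, it is worth recording the consistency check coming from the trace of \reff{field}, which in vacuum reads $R\,\fp - 2f + 3\,\Box\fp = 0$. Evaluated on a candidate configuration with $R\equiv 0$ this collapses to $-2f(0)=0$, so the hypothesis $f(0)=0$ is precisely what is needed for a vanishing Ricci scalar to be an admissible vacuum configuration at all; this clarifies the role of that assumption.

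The central step is to substitute $R\equiv 0$ and $T^M_{ab}=0$ directly into \reff{field}. Because $R$ is constant on the manifold we have $\na_a R = 0$, and since $f$ is of class $C^3$ near $R=0$ the chain rule gives $\na_b\fp = \fpp\,\na_b R = 0$ and $\na_a\na_b\fp = \fppp\,\na_a R\,\na_b R + \fpp\,\na_a\na_b R = 0$, whence $\Box\fp = 0$ as well. Using $f(0)=0$ together with $R=0$, the term $\tfrac12 g_{ab}(f - R\fp)$ also vanishes, so the entire equation reduces to $f'(0)\,G_{ab}=0$. Since $f'(0)\neq 0$ this forces $G_{ab}=0$, and because $R=0$ implies $G_{ab}=R_{ab}$, the spacetime is Ricci-flat.

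I would then invoke the standard general-relativistic result that a static spherically symmetric solution of $R_{ab}=0$ must be Schwarzschild: writing the metric in areal-radius form, the vanishing of the Ricci tensor yields coupled ODEs for the two metric functions whose only solution, up to the integration constant identified with the mass, is $-g_{tt}=g_{rr}^{-1}=1-2M/r$. This part is entirely classical and poses no obstacle.

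The genuinely $f(R)$-specific content lies in the middle step, and that is where I expect the only real subtlety. One must verify that $R\equiv 0$ annihilates \emph{every} higher-derivative contribution, namely both $\na_a\na_b\fp$ and $\Box\fp$, and not merely that $\fp$ takes the constant value $f'(0)$. For this reason the chain-rule differentiation must be carried out explicitly, which is exactly the point at which the $C^3$ regularity at $R=0$ is used, since the second derivative $\na_a\na_b\fp$ involves $\fppp$. Once this bookkeeping is settled, the reduction to $G_{ab}=0$ is immediate and the theorem follows.
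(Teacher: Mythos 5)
Your argument is correct. Note, however, that the paper you are working from does not actually prove this theorem: it is recalled verbatim from Ref.~\cite{amn}, where it is established within the $1+1+2$ covariant formalism as part of a broader (``almost-Birkhoff'') analysis of spherically symmetric vacua in $f(R)$-gravity. For the theorem \emph{as stated here} --- where staticity, spherical symmetry, vacuum and $R\equiv 0$ are all hypotheses rather than conclusions --- your direct substitution into \reff{field} is a legitimate and more elementary route. One simplification is available: since $R(x)=0$ at every point, the spacetime function $x\mapsto \fp\bra{R(x)}$ is identically the constant $f'(0)$, so $\na_a\na_b\fp=\Box\fp=0$ trivially; no chain-rule bookkeeping involving $\fppp$ is required, and the $C^3$ hypothesis is therefore not really consumed by your proof (it is needed in \cite{amn} for the stronger statements, e.g.\ controlling perturbations of $R$ about zero, where higher derivatives of $f$ genuinely enter). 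The remaining steps --- the trace identity showing $f(0)=0$ is exactly the consistency condition for $R\equiv 0$ to be admissible, the vanishing of $\tfrac12 g_{ab}\bra{f-R\fp}$, the reduction to $f'(0)\,G_{ab}=0$ with $f'(0)\neq 0$ forcing $R_{ab}=G_{ab}=0$, and the classical uniqueness of the static spherically symmetric Ricci-flat metric --- are all sound.
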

As the simplest higher order extension to GR which satisfies all the requirements of the theorem above, we consider the gravitational action proposed by Starobinski \cite{star80}, where $f(R)=R+\alpha\,R^2$. This model naturally produces a early time inflationary expansion in cosmology and it is not ruled out by the recent Plank data. At late times, this model gives a geometrical origin of dark energy and it can be shown that the cosmological solution tends to a deSitter solution in the far future and thus mimics the $\Lambda$-CDM cosmology without the fine-tuning problem of the cosmological constant $\Lambda$. In what follows, we give an exact solution for a collapsing star, with anisotropic pressure and heat flux in the interior, in $f(R) = R+ \alpha\,R^2$ theory. The matter in the interior of the cloud is described by the following EMT distribution
\reff{decompEMT} with components
\begin{widetext}
\bea
\label{energy}
\mu &=& -\frac{1}{2 {C^{\prime}}^{5} C^{4}}
\bras{  4\, \alpha \,C^{2} 
\bra{ F^{\prime \prime \prime} {C^{\prime}}^{2}
+2\,F^{\prime} C C^{\prime \prime }  
+F^{\prime} C^{2}  {\dot{C}^{\prime 2}}}
-\alpha \, {C^{\prime}}^{3}\bra{3\, {F^{\prime}}^{2} 
+8\, F^{\prime\prime} C
+4\,  {\dot{F}^{\prime}}C^{2} {\dot{C}^{\prime}} 
-16\,F^{\prime} C \dot{C} {\dot{C}^{\prime}}   }
 \right. \nn\\&&\left.
+2\,{C^{\prime}}^{4}
\bra{4 \,\alpha\,F^{\prime} 
- F^{\prime} C^{2}
-4\, \alpha\,\dot{F}^{\prime}  C \dot{C}  
+8\,\alpha \, F^{\prime}  \dot{C}^{2}}
+12\,\alpha\, F^{\prime} C^{2}{C^{\prime\prime}}^{2} 
-12\,\alpha \, F^{\prime \prime} C^{2} C^{\prime} C^{\prime \prime} 
-4\,\alpha \, F^{\prime} C^{2} C^{\prime} C^{\prime \prime \prime} 
}\;,
\eea

\bea
 \label{prad}
p_{r} &=& -\frac{\alpha}{2{ C^{\prime }}^{3} C^{4}}
\bras{- \, C^{\prime}
\bra{ 8\, F^{\prime \prime} C  
+ {F^{\prime}}^{2} C^{\prime}
+8\,  \dot{F}^{\prime} C^{2} \dot{C}^{\prime} 
-8\,  F^{\prime}  C \dot{C} \dot{C}^{\prime} 
 +4\, F^{\prime} C^{2} \ddot{C}^{\prime}   }
+4\, {C^{\prime}}^{2}\bra{
\ddot{F}^{\prime}  C^{2}
-2\, \dot{F}^{\prime} C \dot{C} 
+2\, F^{\prime} \dot{C}^{2}
 \right.\right. \nn\\&&\left.\left.
-2\, F^{\prime} C \ddot{C}  }
+16\,F^{\prime}{C^{\prime}}^{2} 
+8\,F^{\prime} C^{\prime \prime}C  
+8\, F^{\prime} {\dot{C}}^{\prime \,2} C^{2}}\;,
\eea

\bea
 \label{ptan}
p_{t} &=& -\frac{\alpha}{2{C^{\prime}}^{5} C^{4}}
\bras{ 4 \, C^{\prime }\bra{3\, F^{\prime \prime} C^{2}C^{\prime \prime} 
+ F^{\prime}  C^{2}C^{\prime \prime \prime} }
-4\, {C^{\prime}}^{2}\bra{3\, F^{\prime} C^{2}
+ F^{\prime \prime \prime}  C^{2}
+3\, F^{\prime} C C^{\prime \prime} 
- F^{\prime} C^{2}{ \dot{C}}^{\prime \,2}}
-{C^{\prime}}^{3}\bra{   {F^{\prime}}^{2}
\right. \right. \nn\\&&\left. \left.
-12 F^{\prime \prime}C
+ 4\, {\dot{F}}^{\prime}C^{2}{ \dot{C}}^{\prime}  
 -4\, F^{\prime} C \dot{C} { \dot{C}}^{\prime }
+4\, F^{\prime} C^{2}\ddot{C}^{\prime} }
 - 4 \,{C^{\prime}}^{4}\bra{
4 F^{\prime}
- \ddot{F}^{\prime}  C^{2}
 +3 \,{ \dot{F}}^{\prime } C \dot{C} 
-4\,  F^{\prime}\dot{C}^{2}+ 2 \, F^{\prime} C\ddot{C}  }
 }\;,
\eea

\bea
 \label{heat}
q &=& -\frac{2 \alpha}{ {C^{\prime}}^{4} C^{4}}
\bras{- C^{\prime}  \bra{ {\dot{F}^{\prime}}  C^{2}C^{\prime \prime} 
-2\,  F^{\prime } C C^{\prime \prime} \dot{C}
+ F^{\prime} C^{2}\dot{C}^{\prime \prime} 
 +2\,  F^{\prime \prime} C^{2}\dot{C}^{\prime}  }
{C^{\prime}}^{2} \bra{ \dot{F}^{\prime \prime}  C^{2}
 -2\, F^{\prime \prime} C \dot{C} 
+2\, F^{\prime} C \dot{C}^{\prime }  }
\right. \nn\\&&\left.
-{C^{\prime}}^{3} \bra{2\,\dot{F}^{\prime }C
-6\, F^{\prime }  \dot{C}  }
+3\, F^{\prime} C^{2} C^{\prime \prime} \dot{C}^{\prime} 
  }\;,
\eea
\end{widetext}
 where we define the function $F = F(r)$ by
\be
\label{massfn}
F\equiv r^{3}\, M(r) ~.
\ee
The quantity $M(r) $ is an arbitrary function and the function $C = C(r,t)$ has the form
\be
\label{physradius}
C = r \, \bra{1- \frac32 \sqrt{M(r)}\,t}^{\frac23} ~.
\ee
We can write \reff{physradius} as $C(t,r) = r\, a(t,r)$ with
\be
\label{scale}
a(r,t) =\bra{1- \frac32 \sqrt{M(r)}\,t}^{\frac23}\;,
\ee
representing the inhomogeneous  scale factor. We see from this that the scale
factor behaves as $a =1$ at initial epoch $t=t_i$. It is also clear that $\dot{a} < 0$
in accordance with gravitational collapse.

The metric 
\be 
\label{exactmetric} ds^2 = - dt^2 + C^{\prime{}
2}(r,t) \,dr^2 + C^2(r,t)\, d\Omega^2 ~\;,
\ee 
gives the geometry as realised in the spacetime that corresponds to the structure of the
EMT \reff{decompEMT} with components \reff{energy}-\reff{heat}. The
Ricci scalar for this metric is evaluated as
\be 
\label{Ricci1}
R=\frac{F^{\prime}}{C^{2}\,C'} ={\frac {3\, M +r\,M^{\prime} }{a^{2}
\bra{a +r \, a^{\prime}   } }}
 ~.
\ee
It is apparent that \reff{exactmetric} takes the same form as the Lemaitre-Tolman- Bondi
(LTB) dust model of general relativity.\\


In order to match a spherically symmetric collapsing cloud at the
boundary to an exterior spacetime, the junction conditions \reff{junction1}-\reff{junction5} 
have to be considered. We describe the interior spacetime
$\vv^{-}$ of the collapsing cloud by the metric \reff{exactmetric} and the
exterior spacetime $\vv^{+}$ by the Schwarzschild vacuum. These two spacetimes are matched 
at the surface of the star $\Sigma$ which is denoted by the comoving shell labelling coordinate $r=r_b$. 
Now the metric of the exterior spacetime is
\be
\label{Schwarz}
ds^2=-  \bra{1-\frac{2m}{r_s}}dt^2 + \frac{dr^2}{\bra{1-\frac{2m}{r_s}}}
+r_s^2\,d\Omega^2~\;,
\ee
where $r_s$ is the Schwarzschild radius. The junction conditions  \reff{junction3} and \reff{junction4} 
then imply that on the surface 
\be
G^1_1\vert_{\Sigma}=0\;.
\ee
Since the solution is exactly same as the LTB dust solution in GR, we can easily check that throughout 
the interior spacetime $G^1_1=0$, and hence these two matching conditions are automatically satisfied. 
The extra junction conditions \reff{junction2} and \reff{junction5} imply that
$R_{-} \vert_{\Sigma}= R^{\prime}_{-} \vert_{\Sigma}= 0$, and from these we can deduce that 
the Ricci scalar should have the form
\be
\label{Ricci2}
R \equiv \bra{r_b-r^{2}}^{2}\,g(r,a)~,
\ee
where $g(r,a)$ is a well defined and at least ${\cal{C}}^4$ function of $r$ and $a$. Then from 
\reff{Ricci1} and \reff{Ricci2} we conclude that
\be
3 \, M+ r\,M^{\prime} \equiv \bra{r_b^2-r^{2}}^{2}\,h(r)~,
\ee
where $h(r)$ is a well defined and at least ${\cal{C}}^4$ function of $r$.
The function $M(r)$ can be Taylor expanded in even powers of $r$
such that
\be
\label{mass}
M = M_{0}+r^{2}\,M_{2}+r^{4}\,M_{4}+r^{6}\,M_{6}~.
\ee
and upon substituting this expression into \reff{Ricci1}, we see that there exists values of the coefficients $M_n$ for which the junction
conditions for the Ricci scalar $R$ are fulfilled. This choice for
$M(r)$ also ensures smoothness of the initial data. \\

For our model, without any loss of generality, we may choose $r_b=1$. To transparently show that there exists values of $M_n$ for which the junction
conditions are satisfied, let us choose $h(r) = \bra{a+b\,r^{2}}$ so that
\be
\label{massval}
M_{0} = \frac{a}{3} ~, ~~ M_{2}= \frac{\bra{b-2\,a}}{5}~,
~~ M_{4}=\frac{\bra{a-2\,b}}{7}~, ~~ M_{6}=\frac{b}{9}~.
\ee
We can easily see that the above values of the coefficients will ensure that the Ricci scalar is of the form (\ref{Ricci2}). Thus the extra matching conditions impose strong constraints on the otherwise free function in GR. This means that (unlike GR) any smooth function $M(r)$ is not a physically realistic function in the Starobinski model. We can easily check that $M(r)$ is closely related to the initial density, pressures and heat flux profiles of the star. We see that only those initial data profiles that satisfy the constraints on $M(r)$ are admissible, and hence we may conclude that we need some fine-tuning of the otherwise free parameters to get a solution in $f(R)$ theories. In this regard it is expected that any form of matter perturbation in the interior of the star will disturb this fine-tuning and make the solution unstable.

To check, whether the collapsing matter obeys all the physically reasonable energy conditions with values $a= 3,\, b = -\,3$, we plot the radial profile of $M$ in
Fig \ref{massfig} and the Ricci scalar in Fig \ref{Riccifig} over the radial range 
of 0 to 1.

\begin{figure}[H]
\centering
\includegraphics[width=2.2in]{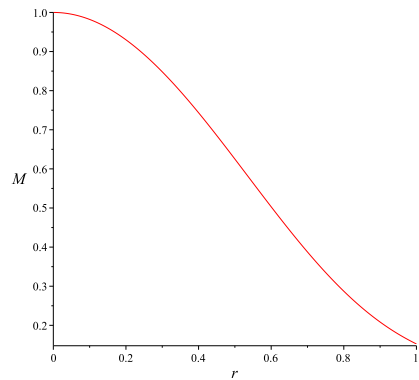}
 \caption{\small The radial profile of the function $M$.
 \label{massfig}}
\end{figure}

\begin{figure}[H]
\centering
\includegraphics[width=2.5in]{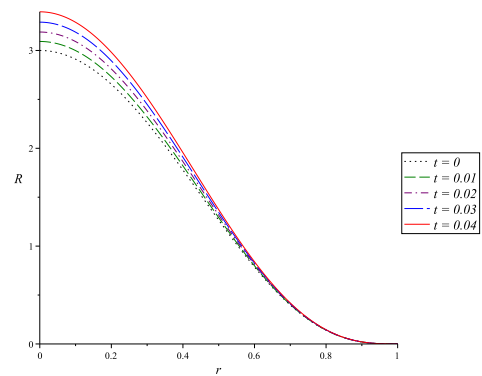}
\caption{\small The radial profile of the
Ricci scalar.
\label{Riccifig}}
\end{figure}

Taking $\alpha = 10^{-10} $, we plot the variation of these
thermodynamic terms against the radius of the 
star. We see in Fig \ref{densityfig} that the energy density is non-negative 
in the interior of the star, with a finite value at the centre and decreases 
with the radial distance with zero value at the boundary. 

\begin{figure}[H]
\centering
\includegraphics[width=2.5in]{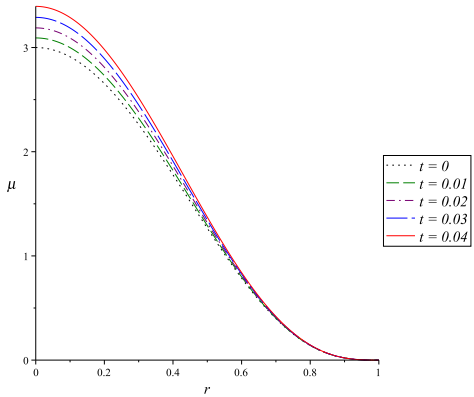}
\caption{\small The radial profile of the
energy density at different times.
\label{densityfig}}
\end{figure}

Fig \ref{Pressureradfig} shows that the radial pressure takes on 
negative values. It is finite at the star's centre and vanishes at the boundary. 
On the other hand, in Fig \ref{Pressuretanfig} the tangential pressure has a 
negative finite value at the centre which increases as a function of the radius,
becomes positive and, after reaching a maximum, decreases and vanishes at 
the boundary.

\begin{figure}[H]
\centering
\includegraphics[width=2.6in]{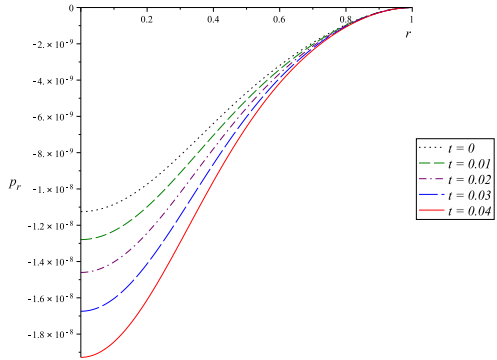}
\caption{\small The radial profile of the
radial pressure at different times.
\label{Pressureradfig}}
\end{figure}

\begin{figure}[H]
\centering
\includegraphics[width=2.6in]{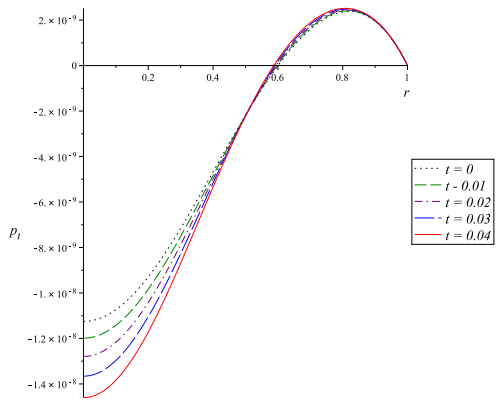}
\caption{\small The radial profile of the
tangential pressure at different times.
\label{Pressuretanfig}}
\end{figure}

The variation of pressure anisotropy $\Pi = p_{t}  - p_{r}$ with respect to the radius of the star 
is shown Fig \ref{Anisotropyfig}. The pressure anisotropy is positive in nature 
($p_{t} > p_{r}$), vanishing at both the centre and boundary. The former 
being a requirement for regularity at the centre of the star and the latter is 
required for the collapsing interior to be matched to \reff{Schwarz}.

\begin{figure}[H]
\centering
\includegraphics[width=2.6in]{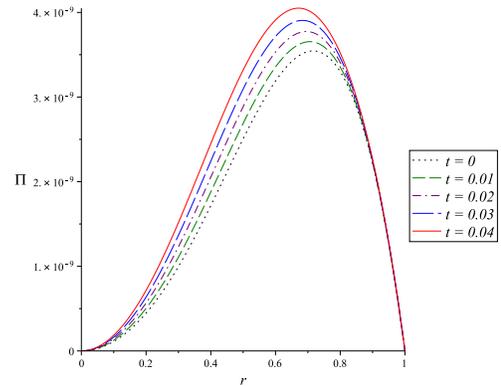}
\caption{\small The radial profile of the
anisotropy parameter at different times.
\label{Anisotropyfig}}
\end{figure}

\begin{figure}[H]
\centering
\includegraphics[width=2.6in]{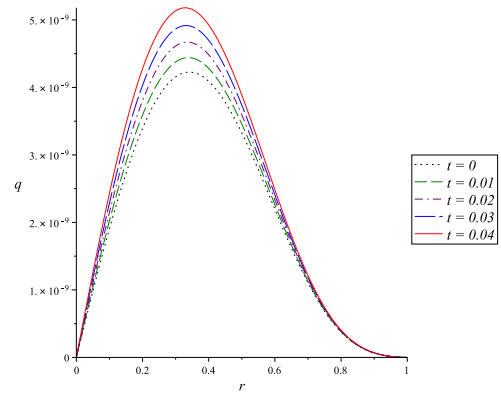}
\caption{\small The radial profile of the
heat flux at different times.
\label{Heatfluxfig}}
\end{figure}

We confirm that the energy conditions \reff{EC1}-\reff{SEC} are
satisfied in Figs \ref{ECfig}. We have considered the two inequalities of the general energy conditions (EC), the extra inequality of weak energy conditions (WEC), strong energy conditions (SEC) and the two extra inequalities of dominant energy conditions (DEC) separately. It is interesting to observe that in spite of negative pressure conditions 
the energy conditions are satisfied in this model.

\begin{figure*}[!htbp]
\centering
\subfloat[ECI \label{ECIfig}]{%
\includegraphics[scale=0.35]{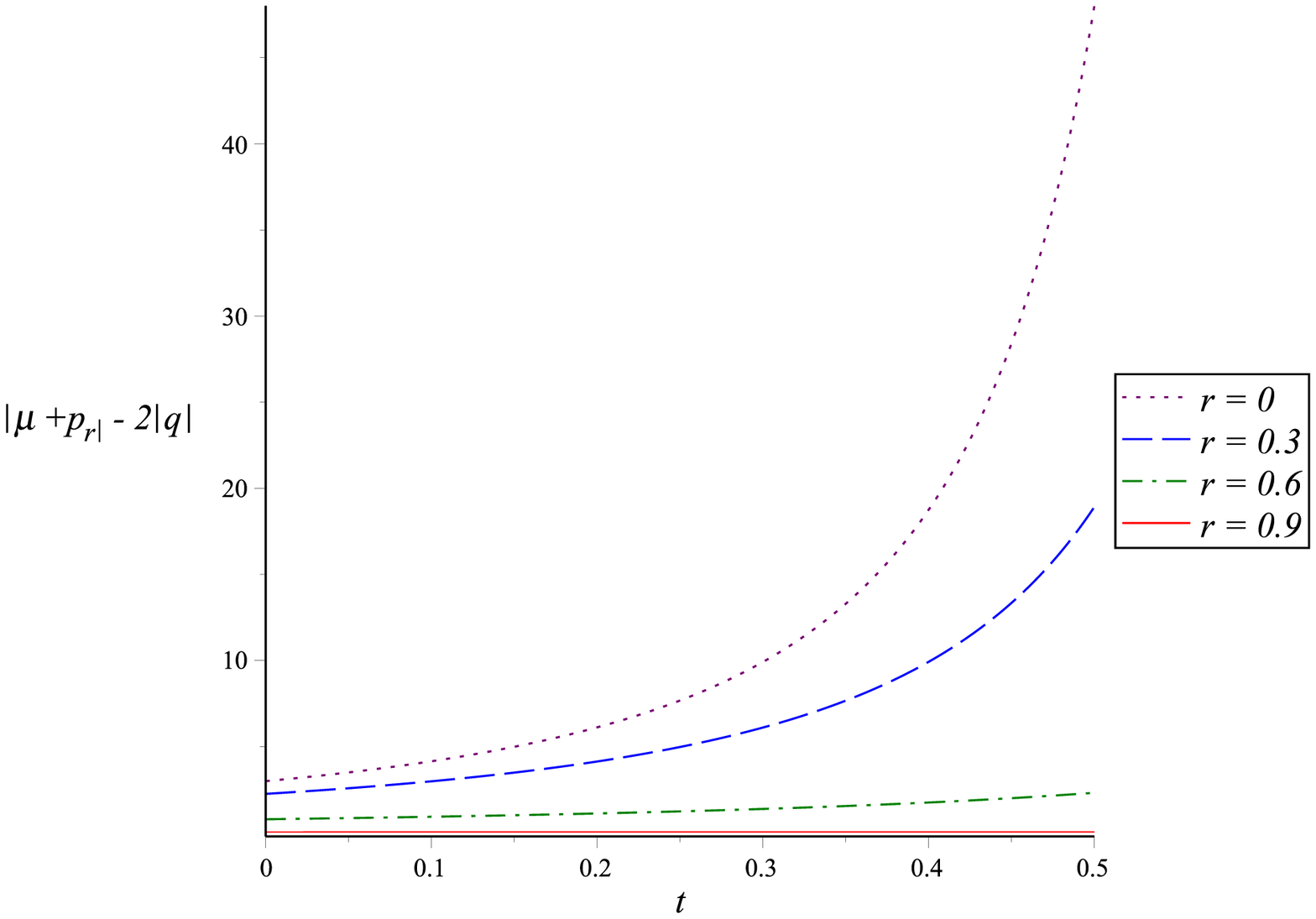}
}
\qquad \quad
\subfloat[ECII  \label{ECIIfig}]{%
\includegraphics[scale=0.35]{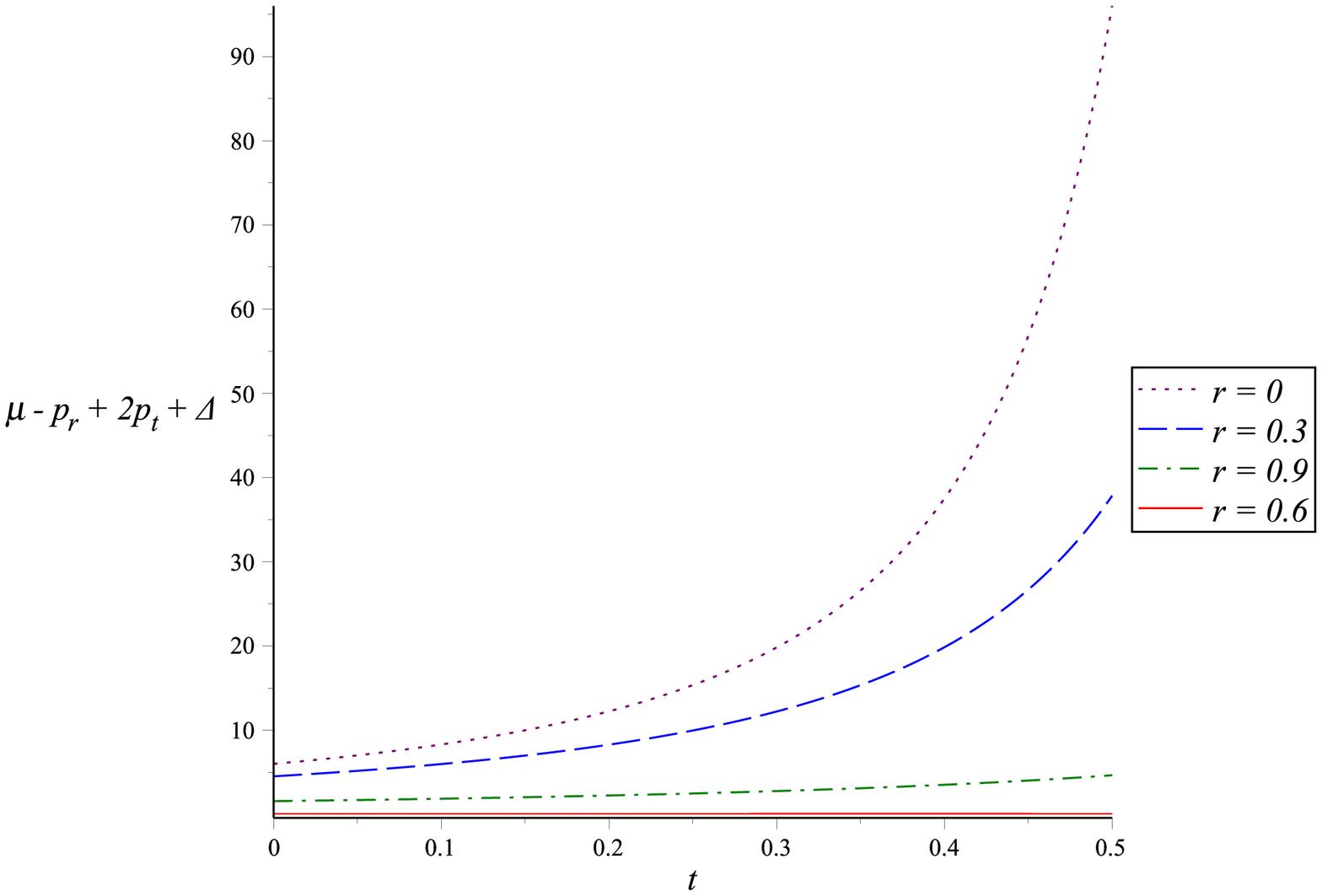}
}\\
\subfloat[WEC \label{WECfig}]{%
\includegraphics[scale=0.33]{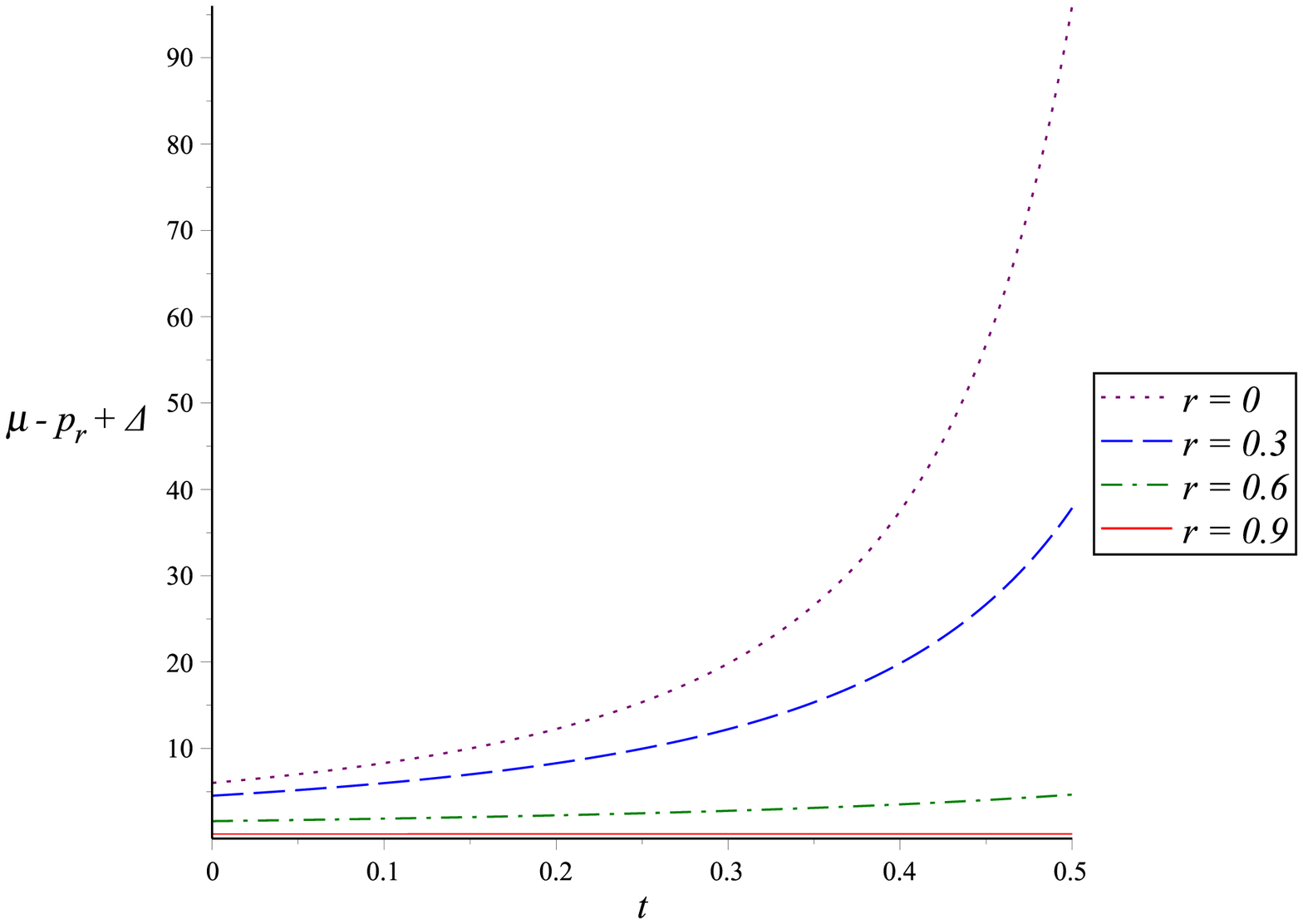}
}
\qquad \quad
\subfloat[DECI  \label{DECIfig}]{%
\includegraphics[scale=0.33]{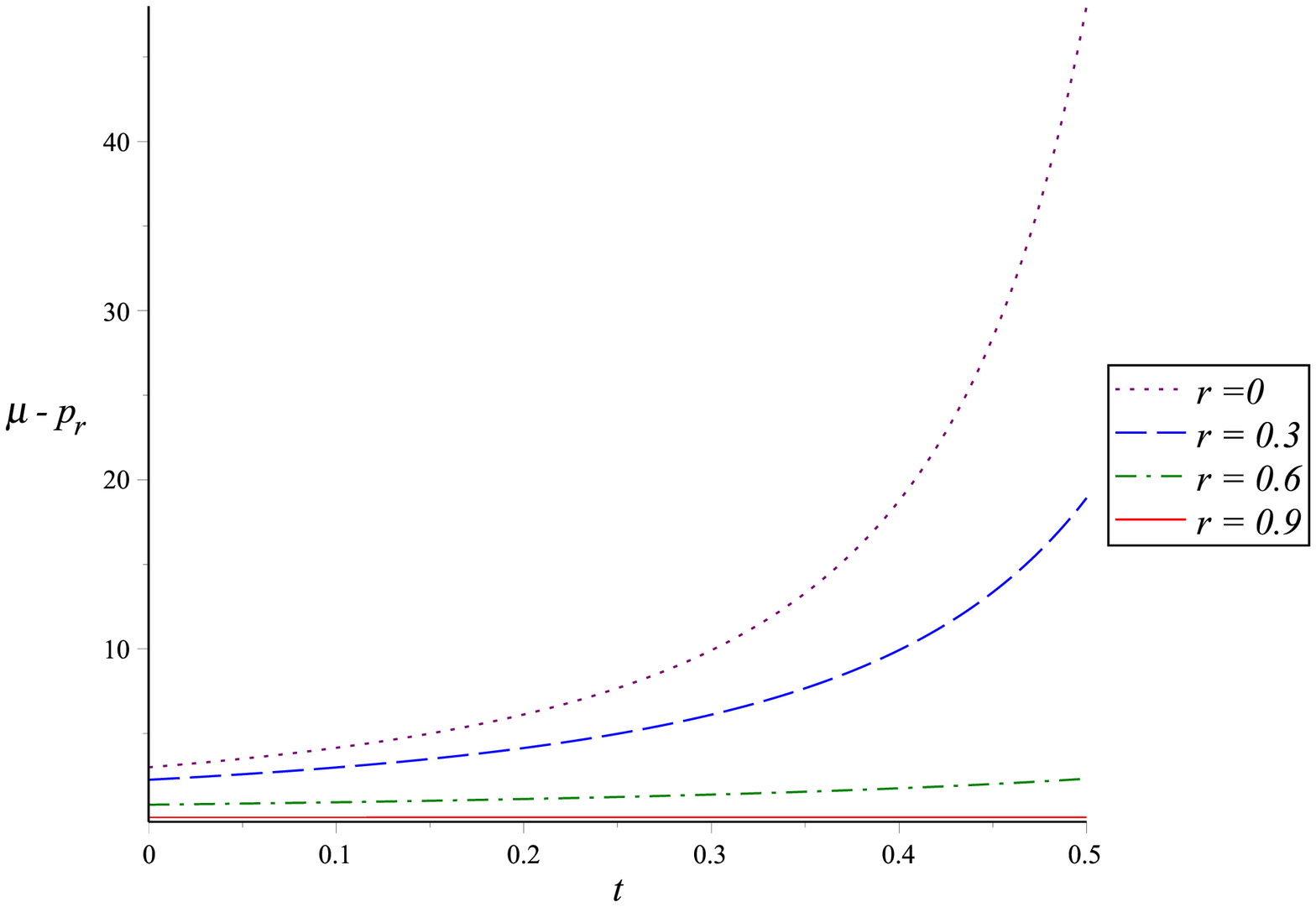}
}\\
\subfloat[DECII  \label{DECIIfig}]{%
\includegraphics[scale=0.33]{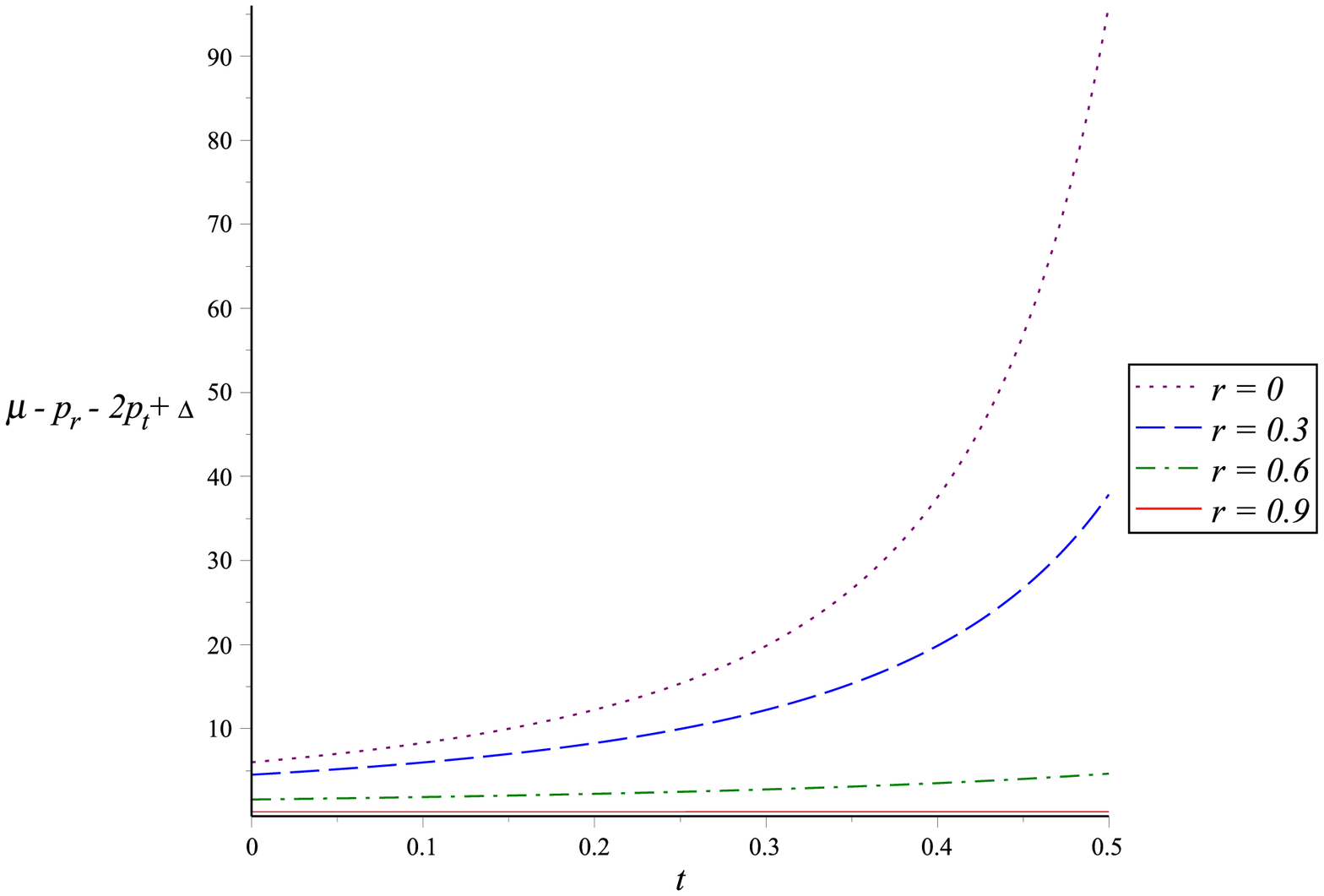}
}
\qquad \quad
\subfloat[SEC  \label{SECfig}]{%
\includegraphics[scale=0.33]{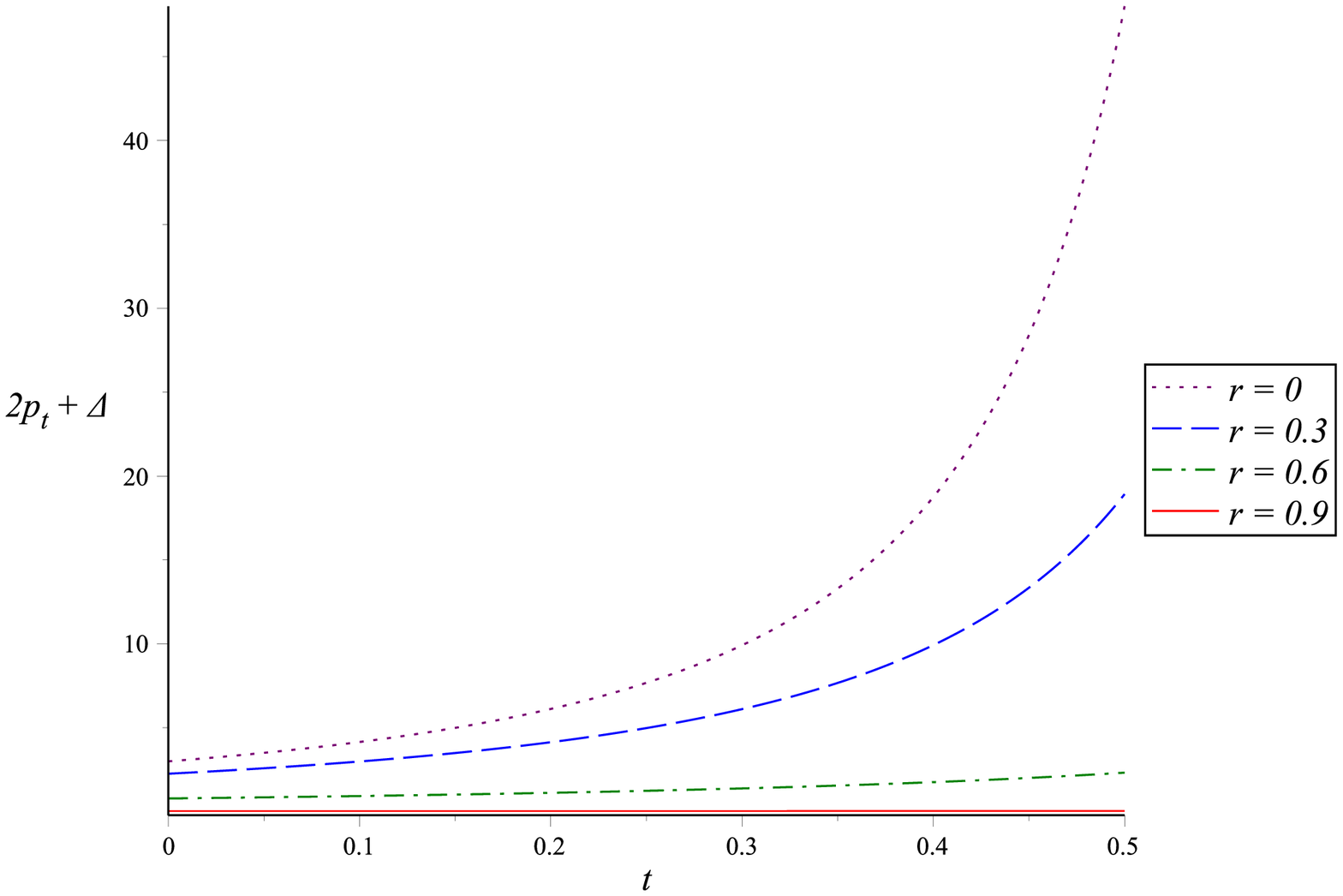}
}
\caption{Energy conditions \reff{EC1}-\reff{SEC} at different times.}
\label{ECfig}
\end{figure*}

\section{Concluding remarks}

In this paper we studied the possibility of the existence of spherically symmetric and astrophysically realistic collapsing stellar solutions in $f(R)$-gravity. The key results which emerged from our analysis are as follows:
\begin{enumerate}[(a)]
\item We showed transparently that the extra matching conditions in $f(R)$-gravity impose strong constraints on the stellar structure and thermodynamic properties which, in our opinion, are unphysical. These constraints make classes of physically realistic collapse scenarios in GR non-admissible in these theories. We showed that apart from a few special types of collapsing matter, these theories demand the collapsing matter to be inhomogeneous in order to smoothly match the interior spacetime with a static vacuum exterior as demanded by the astrophysical observations. Therefore, contrary to the belief that a higher-order theory will expand a set of admissible solutions, we find that the extra matching conditions are in fact contracting the set of physical models.
\item To show that the set of physically realistic collapsing solutions in $f(R)$-gravity is non-empty we explicitly found an analytic solution of a collapsing star with anisotropic pressure and heat flux in the interior for the Starobinski model. The matter in the interior of the star obeys all physically reasonable energy conditions and the interior of the star can be smoothly matched to a Schwarzschild exterior at the boundary. However, we demonstrated that the extra matching conditions in these theories strongly restrict the otherwise free functions of integration in the system. Hence we may conclude that these solutions are unstable to any matter perturbation in the stellar interior and consequently cannot describe a stable astrophysical collapse scenario.
\item It is interesting to note that our claim on the fine tuning of stellar thermodynamics in $f(R)$-gravity is supported by \cite{Vilija}. This paper investigates stellar structures for a broader class of scalar tensor theories. The authors found that the presence of a global potential for the scalar field (which in the context of our paper corresponds to a given theory of gravity), heavily constrains the allowed matter configuration of the star.
\item The Openheimer-Snyder-Datt model, a widely accepted collapse model for black hole formation via dynamical collapse, is no longer a viable model in the modified theories. Thus to establish the existence of a black hole via stellar collapse, we need to find new physically reasonable solutions. This may not be a simple task. For example, in the inhomogeneous solution which we found in this paper, it can be easily shown that the apparent horizon, which is the boundary of the trapped region in the spacetime, does not form early enough to shield the central singularity at $r=0$ from the external observers \cite{gos} and will produce a naked singularity as the end state of gravitational collapse. It is quite interesting that investigating the cosmic censorship hypothesis \cite{rp} (which states that naked singularities are not possible in a physically realistic collapse scenario) in modified theories of gravity may be far more difficult than in GR as it is well stablished that inhomogeneity is closely related to spacetime shear and the Weyl curvature of the collapsing matter \cite{Hamid,Joshi} which are the geometrical factors that produce a naked singularity.\\
\end{enumerate}
 
\begin{center}
{\bf Acknowledgments}
\end{center}
AMN and RG are supported by the National Research Foundation (South Africa). SDM acknowledges that this work is based upon 
research supported by the South African Research Chair Initiative of the Department of Science and Technology. SGG thanks the University of KwaZulu-Natal for financial support.


\end{document}